\documentclass[12pt]{amsart}

\usepackage{setspace}
\usepackage{comment}
\usepackage{graphicx}
\usepackage{caption}
\usepackage{subcaption}
\usepackage{natbib}
\usepackage{color}
\usepackage{array}

\usepackage[foot]{amsaddr}

\usepackage{anysize}
\marginsize{2cm}{2cm}{2cm}{3cm}

\onehalfspacing

\interfootnotelinepenalty=10000

\newtheorem{thm}{Theorem}

\newtheorem{lem}[thm]{Lemma}
\newtheorem{prop}[thm]{Proposition}
\newtheorem{defn}[thm]{Definition}

\newtheorem{rem}[thm]{Remark}

\numberwithin{thm}{section}
\numberwithin{equation}{section}
\numberwithin{figure}{section}
\numberwithin{table}{section}

\newcommand{\R}{\mathbb{R}}
\newcommand{\E}{\mathbb{E}}
\newcommand{\TM}{\mathrm{TM}}
\newcommand{\DT}{\mathrm{DT}}

\newcommand{\VAR}{{\rm VaR}}
\newcommand{\ES}{{\rm ES}}

\newcommand{\Var}{{\rm Var}}

\newcommand{\rc}{{\rm RC}}
\newcommand{\mrc}{\mathrm{MRC}}
\newcommand{\frc}{\mathrm{FRC}}
\newcommand{\CED}{\mathrm{CED}}

\newcommand{\corr}{\rm Corr}

\newcommand{\Prob}{\mathbb{P}}
\newcommand{\F}{\mathcal{F}}
\newcommand{\cadlag}{c\`{a}dl\`{a}g }
\newcommand{\Rspace}{\mathcal{R}}

\newcommand{\norm}[2]{\| #1 \|_{#2}}

%%%%%%%%%%%%%%%%%%%%%%%%%%%%%%%%%%%%%%%%%%%%%%%%%%%%%%

\begin{document}

\title[Drawdown: From Practice to Theory and Back Again]{Drawdown: From Practice to Theory and Back Again \\ \vspace{5pt} (\lowercase{\normalfont forthcoming in} {\emph{\scshape \itshape \small Mathematics and Financial Economics}})} 
\author{Lisa R. Goldberg$^1$}
\address{$^1$Department 
of Statistics and Economics and Center for Risk Management Research, University of California, 
Berkeley, CA 94720-3880, USA}
\email{$^1$lrg@berkeley.edu}
\author{Ola Mahmoud$^2$}
\address{$^2$Faculty of Mathematics and Statistics, University of St. Gallen, Bodanstrasse 6, CH-9000, Switzerland and Center for Risk Management Research, University of California, Berkeley, Evans Hall, CA 94720-3880, USA}
\email{$^2$olamahmoud@berkeley.edu}
\thanks{We are grateful to Robert Anderson for insightful comments on the material discussed in this article; to Alexei Chekhlov, Stan Uryasev, and Michael Zabarankin for their feedback on a previous draft of this work; to Vladislav Dubikovsky, Michael Hayes, and M\'{a}rk Horv\'{a}th for their contributions to an earlier version of this article; to Carlo Acerbi for providing detailed comments on a previous draft; and to the referees and editors of \emph{Mathematics and Financial Economics} for their valuable feedback.}
\date{ \today}
\maketitle

\begin{abstract}
Maximum drawdown, the largest cumulative loss from peak to trough, is one of the most
widely used indicators of risk in the fund management industry, but one of the least developed in
the context of measures of risk. We  formalize drawdown risk  as Conditional
Expected Drawdown (CED), which is the tail mean of maximum drawdown distributions.
We show that CED is a degree one positive homogenous risk measure, so that it can be linearly attributed to factors;   and convex, so that it can be used in quantitative optimization. We empirically explore the differences in risk attributions based on CED,
Expected Shortfall (ES) and volatility. An important feature of CED is its sensitivity to serial correlation.
In  an empirical study that fits AR(1) models to  US Equity and US Bonds, we find substantially higher correlation between the autoregressive parameter and CED than with ES or with volatility.

\bigskip
\noindent \emph{Key terms:} drawdown; Conditional Expected Drawdown; deviation measure; risk attribution;  serial correlation 

\bigskip
\noindent \emph{Disclosure of potential conflicts of interest:} the authors declare that they have no conflict of interest.
%\noindent \emph{MSC(2000):} 91B30, 91G10, 91G70, 62M10, 62P20

\end{abstract}

\newpage 

%%%%%%%%%%%%%%%%%%%%%%%%%%%%%%%%%%%%%%%%%%%%%%%%%%%%%%

\section{Introduction}

A levered investor is liable to get caught in a liquidity trap: unable to secure funding after an abrupt market decline, he may be forced to sell valuable positions under unfavorable market conditions. This experience was commonplace during the 2007-2009 financial crisis and it has refocused the attention of both levered and unlevered investors on an important liquidity trap trigger, a drawdown, which is the maximum decline in portfolio value over a fixed horizon (see Figure \ref{drawdown}). 

\begin{figure}
\centering
  \includegraphics[width=0.7\linewidth]{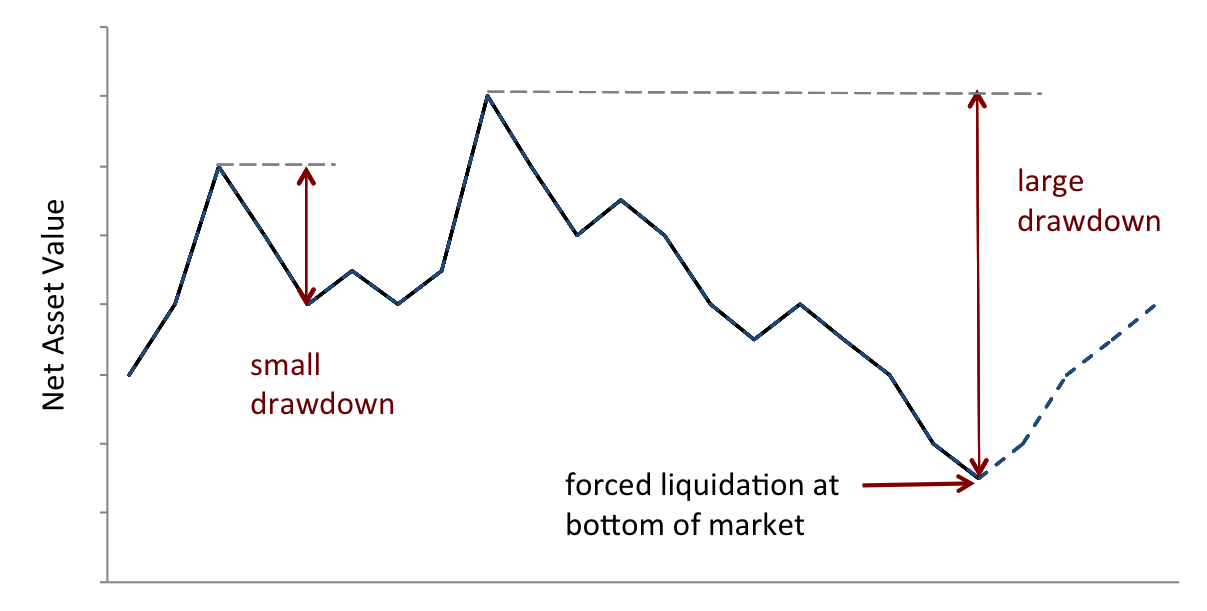}
  \caption{\onehalfspacing Simulation of a portfolio's net asset value over a finite path. A large drawdown may force liquidiation at the bottom of the market, and the proceeding market recovery is never experienced.\\}
  \label{drawdown}
\end{figure}

In the event of a large drawdown, common risk diagnostics, such as volatility, Value-at-Risk, and Expected Shortfall, at the end of the intended investment horizon are less significant. Indeed, within the universe of hedge funds and commodity trading advisors (CTAs), one of the most widely quoted measures of risk is maximum drawdown. The notion of drawdown has been extensively studied in the literature of applied probability theory, which we review in Section~\ref{section:litreview}. However, a generally accepted mathematical methodology for forming expectations about future potential maximum drawdowns does not seem to exist in the investment management industry. Drawdown in the context of risk and deviation measures has failed to attract the same kind of applied research devoted to other more conventional risk measures.

Our purpose is to formulate a (i) mathematically sound and (ii) practically useful measure of drawdown risk. Our formalization of drawdown risk is achieved by modeling continuous-time cumulative returns within a time horizon $T \in (0,\infty)$ as a stochastic process $X$ representing return paths, to which a certain real-valued functional, the \emph{Conditional Expected Drawdown}, is applied. Mathematically, the process $X$ is transformed to the random variable $\mu(X)$, representing the maximum drawdown within a finite path. At confidence level $\alpha \in [0,1]$, the \emph{Conditional Expected Drawdown} CED$_\alpha$ is then defined to be the expected maximum drawdown given that some maximum drawdown threshold $\rm{DT}_\alpha$, the $\alpha$-quantile of the maximum drawdown distribution, is breached:
$$ \rm{CED}_{\alpha}(X)=\mathbb{E}\left(\mu(X) \mid \mu(X)>\rm{DT}_{\alpha}\right).$$

In the context of quantitative risk measures, CED  is a deviation measure in the sense of \citet{Rockafellar2002, Rockafellar2006}. 
In particular, this implies that CED is convex with respect to portfolio weights, which 
means that it promotes diversification and can be used in an optimizer.  It is also homogenous of degree one, so that it supports linear risk attribution under Euler's homogenous function theorem.
 
By focusing on the \emph{maximum} of all drawdowns within a path of fixed length $T$, we address a highly relevant risk management concern affecting fund managers on a daily basis, who ask themselves: what is the expected maximum possible cumulative drop in net asset value within the investment horizon $T$? If this loss exceeds a certain threshold, the investor may be forced to liquidate. For a given investment horizon $T$, Conditional Expected Drawdown indicates this expected cumulative loss in excess of a threshold, and it can be measured for various confidence levels.

Because Conditional Expected Drawdown is defined as the tail mean of a distribution of maximum drawdowns, it is a downside risk metric perfectly analogous to Expected Shortfall, which is the tail mean of a return distribution. Hence, much of the theory and practice of Expected Shortfall carries over to Conditional Expected Drawdown.

We will show, however, that drawdown is inherently path dependent and accounts for serial correlation, whereas Expected Shortfall does not account for consecutive losses.

\subsection{Literature Review}
\label{section:litreview}

The notion of drawdown has been extensively studied in the literature of applied probability theory and in research addressing active portfolio management, which we review next. However, a generally accepted mathematical methodology for forming expectations about future potential maximum drawdowns does not seem to exist in either the investment management industry or the academic literature. Drawdown in the context of risk and deviation measures has hence failed to attract the same kind of applied research devoted to other more conventional risk measures.
Our work hence complements the existing literature as it develops a mathematically sound and practically useful measure of drawdown risk. 

The analytical assessment of drawdown magnitudes has been broadly studied in the literature of applied probability theory. To our knowledge, the earliest mathematical analysis of the maximum drawdown of a Brownian motion appeared in \citet{Taylor1975}, and it was shortly afterwards generalized to time-homogenous diffusion processes by \citet{Lehoczky1977}. \citet{Douady2000} and \citet{Magdon2004} derive an infinite series expansion for a standard Brownian motion and a Brownian motion with a drift, respectively. The discussion of drawdown magnitude was extended to studying the frequency rate of drawdown for a Brownian motion in \citet{Landriault2015}. Drawdowns of spectrally negative L\'{e}vy processes were analyzed in \citet{Mijatovic2012}. The notion of drawup, which measures the maximum cumulative gain relative to a running minimum, has also been investigated probabilistically, particularly in terms of its relationship to drawdown; see for example \citet{Hadjiliadis2006}, \citet{Pospisil2009}, and \citet{Zhang2010}.

Reduction of drawdown in active portfolio management has received considerable attention in mathematical finance research. \citet{Grossman1993} considered an asset allocation problem subject to drawdown constraints; \citet{Cvitanic1995} extended the same optimization problem to the multi-variate framework; \citet{Chekhlov2003, Chekhlov2005} developed a linear programming algorithm for a sample optimization of portfolio expected return subject to constraints on drawdown, which, in \citet{Krokhmal2003}, was numerically compared to shortfall optimzation with applications to hedge funds in mind; \citet{Carr2011} introduced a new European style drawdown insurance contract and derivative-based drawdown hedging strategies; and most recently \citet{Cherney2013}, \citet{Sekine2013}, \citet{Zhang2013} and \citet{Zhang2015} studied drawdown optimization and drawdown insurance under various stochastic modeling assumptions.
\citet{Uryasev2014a} reformulated the necessary optimality conditions for a portfolio optimization problem with drawdown in the form of the Capital Asset Pricing Model (CAPM), which is used to derive a notion of drawdown beta. More measures of sensitivity to drawdown risk were introduced in terms of a class of drawdown Greeks in \citet{Pospisil2010}.
 
In the context of quantitative risk measurement, \citet{Chekhlov2003, Chekhlov2005} develop a quantitative measure of drawdown risk called Conditional Drawdown at Risk (CDaR). Like CED, CDaR is a deviation measure (\citet{Rockafellar2002, Rockafellar2006}). Unlike CED, however, CDaR focuses on all drawdowns rather than maximum drawdowns.

%%%%%%%%%%%%%%%%%%%%%%%%%%%%%%%%%%%%%%%%%%%%%%%%%%%%%%

\section{Measuring Drawdown Risk}
\label{section:measuring_ced}

We use the general setup of \citet{Cheridito2004} for the mathematical formalism of continuous-time path dependent risk. Continuous-time cumulative returns, or equivalently net asset value processes, are represented by essentially bounded \cadlag processes (in the given probability measure) that are adapted to the filtration of a filtered probability space. More formally, for a time horizon $T \in (0,\infty)$, let $(\Omega, \F, \{\F_t\}_{t\in[0,T]},\Prob)$ be a filtered probability space satisfying the usual assumptions, that is the probability space $(\Omega, \F,\Prob)$ is complete, $(\F_t)$ is right-continuous, and $\F_0$ contains all null-sets of $\F$. For $p \in [1,\infty]$,  $(\F_t)$-adapted \cadlag processes lie in the Banach space
$$ \Rspace^p = \left\{  X \colon [0,T] \times \Omega \to \R \mid X  \hspace{5pt} (\F_t)\textrm{-adapted \cadlag process }, \norm{X}{\Rspace^p} \right\}, $$
which comes equipped with the norm
$$ \norm{X}{\Rspace^p} := \norm{X^*}{p} $$
where $X^* = \sup_{t\in[0,T]} |X_t|$.

All equalities and inequalities between processes are understood throughout in the almost sure sense with respect to the probability measure $\Prob$. For example, for processes $X$ and $Y$, $X \leq Y$ means that for $\Prob$-almost all $\omega \in \Omega$, $X_t(\omega) \leq Y_t(\omega)$ for all $t$.

\begin{defn}[Continuous-time path-dependent risk measure]
A continuous-time \emph{path-dependent risk measure} is a real-valued function $\rho: \Rspace^\infty \to \R$.\\
\end{defn}
 
In practice, where one works in a discrete universe, this continuous-time setup is discretized by choosing the frequency of observations over the return horizon $T$. This adds a crucial parameter to the analysis, as higher frequency observations tend to yield larger drawdowns. Consider the May 2011 flash crash. When working at a daily frequency, one never sees the flash crash drawdown, no matter how long the investment horizon.\footnote{See \cite{Madhavan2012} for an analysis of the flash crash.}

\subsection{Maximum Drawdown}

\begin{defn}[Drawdown process]
For a horizon $T\in(0,\infty)$, the \emph{drawdown process} $D^{(X)}:=\{D^{(X)}_t\}_{t\in[0,T]}$ corresponding to a stochastic process $X\in\Rspace^\infty$ is defined by  
$$ D^{(X)}_t = M^{(X)}_t - X_t \ ,$$
where 
$$M^{(X)}_t = \sup_{u\in[0,t]} X_u$$
is the running maximum of $X$ up to time $t$.
\end{defn}

In practice, the use of the maximum drawdown as an indicator of risk is particularly popular in the universe of hedge funds and commodity trading advisors, where maximum drawdown adjusted performance measures, such as the Calmar ratio, the Sterling ratio and the Burke ratio, are frequently used. 

\begin{defn}[Maximum drawdown]
Within a fixed time horizon $T \in (0,\infty)$, the \emph{maximum drawdown} of the stochastic process $X\in\Rspace^\infty$ is the maximum drop from peak to trough of $X$ in $[0,T]$, and hence the largest amongst all drawdowns $D^{(X)}_t$:
$$ \mu (X) = \sup_{t\in[0,T]}\{D^{(X)}_t\} .$$
Equivalently, maximum drawdown can be defined as the random variable obtained through the following transformation of the underlying stochastic process $X$:
$$\mu(X) = \sup_{t\in[0,T]} \sup_{s\in[t,T]} \left\{ X_s - X_t\right\} \ .$$
\end{defn}

Even though, in a given horizon, only a single maximum drawdown is realized along any given path, it is beneficial to consider the distribution from which the maximum drawdown is taken. By looking at the maximum drawdown distribution, one can form reasonable expectations about the size and frequency of maximum drawdowns for a given portfolio over a given investment horizon. 

Figure \ref{maxdd_dist} shows (A) the empirical maximum drawdown distribution (for paths of length 125 business days) of the daily S\&P 500 time series over the period 1950 to 2013, and (B) the simulated distribution for an idealized Gaussian random variable. Both distributions are asymmetric, which implies that very large drawdowns occur less frequently than smaller ones. 
Using Monte Carlo simlations, \cite{Burghardt2003} show that maximum drawdown distributions are highly sensitive to the length of the track record\footnote{The track record is understood as the length of the history of an investment fund since its inception.} (increases in the length of the track record shift the entire distribution to the right), mean return (for larger mean returns, the distribution is less skewed to the right, since large means tend to produce smaller maximum drawdowns, volatility of returns (higher volatility increases the likelihood of large drawdowns), and data frequency (a drawdown based on lower frequency data would ignore the flash crash). 

The tail of the maximum drawdown distribution, from which the likelihood of a drawdown of a given magnitude can be distilled, is of particular interest in practice. Our drawdown risk metric, defined next, is a tail mean of the maximum drawdown distribution.

\begin{figure}
\centering
\begin{subfigure}{.5\textwidth}
  \centering
  \includegraphics[width=\linewidth]{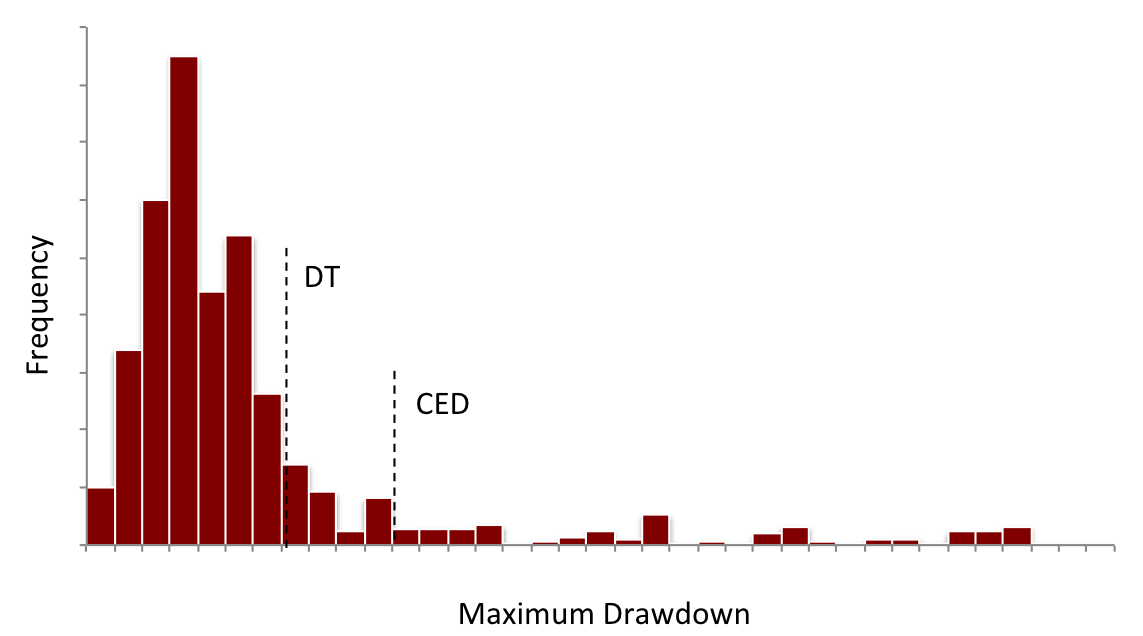}
  \caption{}
  \label{a}
\end{subfigure}%
\begin{subfigure}{.5\textwidth}
  \centering
  \includegraphics[width=\linewidth]{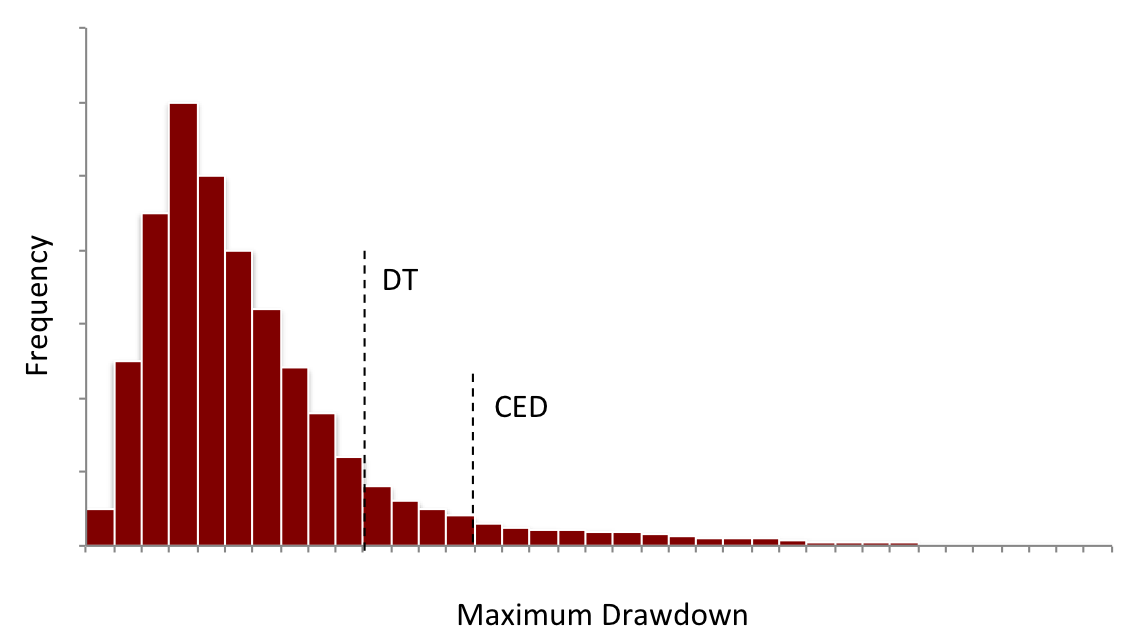}
  \caption{}
  \label{b}
\end{subfigure}
\caption{\onehalfspacing (A) Empirical distribution of the realized 6-month maximum drawdowns for the daily S\&P 500 over the period 1 January 1950 to 31 December 2013, together with the 90\% quantile (the \emph{drawdown threshold} DT) and tail-mean (CED) of the distribution. (B) Distribution of 6-month maximum drawdowns for an idealized standard normally distributed random variable, together with the 90\% quantile and tail-mean of the distribution.}
\label{maxdd_dist}
\end{figure}

\subsection{Conditional Expected Drawdown}

Our proposed drawdown risk metric, the \emph{Conditional Expected Drawdown} (Definition~\ref{defn:ced}), measures the average of worst case maximum drawdowns exceeding a quantile of the maximum drawdown distribution. Hence, it is analogous to the return-based Expected Shortfall ($\ES$). Both ES and CED are given by the tail mean of an underlying distribution, namely that of the losses and maximum drawdowns, respectively.

 Analogous to the return-based Value-at-Risk (VaR), we define, for confidence level $\alpha \in [0,1]$, the \emph{maximum drawdown threshold} $\rm{DT}_\alpha$ to be a quantile of the maximum drawdown distribution:
$$ \DT_{\alpha} \left( \mu(X) \right)= \inf \left\{ m  \mid  \mathbb{P} \left( \mu(X) > m \right) \leq 1-\alpha \right\} $$
It is thus the smallest maximum drawdown $m$ for which the probability that the maximum drawdown $\mu(X)$ exceeds $m$ is at most $(1-\alpha)$. 
For example, the 95\% maximum drawdown is both a worst case for drawdown in an ordinary period and a best case among extreme scenarios. It separates the 5\% worst maximum drawdowns from the rest.  

\begin{defn}[Conditional Expected Drawdown]
At confidence level $\alpha \in [0,1]$, the \emph{Conditional Expected Drawdown} CED$_\alpha:\Rspace^\infty\to\R$ is the function mapping $\mu(X)$ to the expected maximum drawdown given that the maximum drawdown threshold at $\alpha$ is breached. More formally, 
$$ \CED_\alpha \left(X \right) = \frac{1}{1-\alpha} \int_\alpha^1 \rm{DT}_\mathit{u}\left( \mu(X) \right) d \mathit{u} .$$
If the distribution of $\mu(X)$ is continuous, then $\CED_\alpha$ is equivalent to the tail conditional expectation:
$$ \rm{CED}_{\alpha}\left(X \right) = \mathbb{E}\left(\mu(X) \mid \mu(X) > \rm{DT}_{\alpha}\left( \mu(X) \right)\right).$$
\label{defn:ced}
\end{defn}

In other words, CED is the \emph{tail mean} (\cite{AcerbiTasche2002a}) over the maximum drawdown distribution, where for confidence level $\alpha \in (0,1)$, and assuming $\E[\mu(X)] < \infty$, the \emph{$\alpha$-tail mean} of $\mu(X)$ is given by: 
$$\TM_\alpha(\mu(X)) = \frac{1}{1-\alpha} \int_\alpha^1 \DT_u(\mu(X)) d\mathit{u}\ . $$

%%%%%%%%%%%%%%%%%%%%%%%%%%%%%%%%%%%%%%%%%%%%%%%%%%%%%%

\section{Properties of Conditional Expected Drawdown}
\label{section:axioms}

We derive theoretical properties of Conditional Expected Drawdown, most notably convexity and positive homogeneity, and prove that it is a \emph{generalized deviation measure}, as developed by \cite{Rockafellar2002, Rockafellar2006}. Broadly speaking, deviation measures obey axioms taken from the properties of measures such as standard deviation and semideviation. We generalize these axioms to our path-dependent universe.

\begin{defn}[Generalized Path-Dependent Deviation Measure]
A generalized path-dependent \emph{deviation measure} is a path-dependent risk measure $ \delta : \Rspace^\infty \to \R$ satisfying the following axioms:
\begin{itemize} %\itemsep -5pt
\item[(D0)] Normalization: for all constant deterministic $C \in \Rspace^\infty$, $\delta(C) = 0$.
\item[(D1)] Positivity: for all $X \in \Rspace^\infty$, $\delta(X) \geq 0$.
\item[(D2)] Shift invariance: for all $X \in \Rspace^\infty$ and all constant deterministic $C \in \Rspace^\infty$, $\delta(X+C) = \delta(X)$. 
\item[(D3)] Convexity: for all $X, Y \in \Rspace^\infty$ and $\lambda \in [0,1]$, $\delta(\lambda+ (1-\lambda)Y) \leq \lambda \delta(X) + (1-\lambda)\delta(Y)$.  
\item[(D4)] Positive degree-one homogeneity: for all $X \in \Rspace^\infty$ and $\lambda > 0$, $\delta(\lambda X) = \lambda \delta(X)$. 
\end{itemize} 
\end{defn} 

Any portfolio of zero value and, more generally, of constant deterministic value is not exposed to drawdown risk, and so for all constant deterministic $C \in \Rspace^\infty$, we have $\CED_\alpha(C) = 0$, and hence axiom (D0) is satisfied. Moreover, CED satisfies (D1) because maximum drawdown is by definition non-negative. The following Lemma proves the shift invariance property (D2), 
which essentially states that by (deterministically) shifting the path of the portfolio value up or down, the drawdown within that path remains unchanged. 

\begin{lem}\label{lem:trans_inv}
 For all $X \in \Rspace^\infty$ and all constant almost surely $C \in \Rspace^\infty$, $\CED_\alpha(X+C) = \CED_\alpha(X)$ (for all $\alpha \in (0,1)$). 
\end{lem}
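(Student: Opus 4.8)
The plan is to reduce the claim about $\CED_\alpha$ to a statement about the maximum drawdown functional $\boldsymbol\mu$ alone. Since $\CED_\alpha = \TM_\alpha \circ \boldsymbol\mu$, it suffices to establish that the maximum drawdown random variable is itself translation invariant, i.e. that $\boldsymbol\mu(X_{T_n} + C)$ and $\boldsymbol\mu(X_{T_n})$ coincide almost surely. Equality of these two random variables immediately forces equality of their distributions, hence of their $\alpha$-tail means, and the lemma follows.

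The heart of the argument is the observation that $\boldsymbol\mu$ depends on the return path only through pairwise differences of its entries. Interpreting the translation $X_{T_n} + C$ as the addition of the (almost surely constant) value $c \in \R$ to each coordinate, the $j$-th entry of the translated path is $X_{t_j} + c$. For any indices $i < j$ one then has $(X_{t_j} + c) - (X_{t_i} + c) = X_{t_j} - X_{t_i}$, so the constant cancels in every term appearing in $\boldsymbol\mu(X_{T_n}) = \max_{1 \leq i < n} \max_{i < j \leq n} \{ X_{t_j} - X_{t_i}, 0 \}$. Taking the finite double maximum of these unchanged quantities yields $\boldsymbol\mu(X_{T_n} + C) = \boldsymbol\mu(X_{T_n})$ pointwise on $\Omega$.

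With the translation invariance of $\boldsymbol\mu$ in hand, I would conclude by applying $\TM_\alpha$ to both sides: since $\boldsymbol\mu(X_{T_n} + C)$ and $\boldsymbol\mu(X_{T_n})$ are the same random variable, their tail means agree for every $\alpha \in (0,1)$, giving $\CED_\alpha(X_{T_n} + C) = \CED_\alpha(X_{T_n})$. I expect no substantive obstacle; the only point requiring care is the correct reading of $X_{T_n} + C$ as adding a single constant to each coordinate of the path, after which the cancellation of differences is immediate. It is worth flagging in the write-up that this is exactly why CED fails the monetary axiom (A1): the computation produces $\CED_\alpha(X_{T_n} + C) = \CED_\alpha(X_{T_n})$ rather than the monetary relation $\CED_\alpha(X_{T_n}) - c$, because a uniform shift of the entire price path leaves all peak-to-trough drops unaffected.
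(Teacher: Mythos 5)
Your proof is correct and takes essentially the same approach as the paper: both arguments rest on the observation that an almost surely constant shift cancels out of the drawdown computation, after which applying $\TM_\alpha$ to the identical random variables gives the result. The only cosmetic difference is that the paper performs the cancellation in the components $d_j(X_{T_n}+C) = \max_{1\leq i\leq j}\{X_{t_i}+C\} - X_{t_j} - C = d_j(X_{T_n})$ of the drawdown path, whereas you cancel the constant directly in the pairwise differences of the double-maximum definition of $\boldsymbol\mu$.
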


\begin{proof}
The drawdown process $D^X$ corresponding to $X$ is shift invariant, since for $t\in[0,T]$, 
$$M_t^{(X+C)} = \sup_{u\in[0,t]}(X+C)_u = \sup_{u\in[0,t]}(X)_u+C = M_t^{(X)}+C\ .$$ 
It follows that
$D^{(X+C)} = M^{(X+C)} - X- C = M^{(X)}+C-X-C = M^{(X)}-X = D^{(X)}$. 
Therefore, 
$$\mu(X+C) = \sup_{t\in[0,T]}\left\{  D_t^{(X+C)}\right\} = \sup_{t\in[0,T]}\left\{ D_t^{(X)}\right\} = \mu(X)\ .$$
Hence, $\CED_\alpha(X+C) = \CED_\alpha(X)$.\\
\end{proof}

We next focus on the properties of convexity (D3) and positive homogeneity (D4) of generalized deviation measures.

\subsection{Convexity of CED}

According to \cite{FollmerSchied2002, FollmerSchied2010, FollmerSchied2011}, the  essence of diversification is encapsulated in the convexity axiom. Suppose we have two processes $X$ and $Y$ representing cumuative returns to two portfolios. Rather than investing fully in one of the two portfolios, an investor could \emph{diversify} by allocating a fraction $\lambda \in [0,1]$ of his capital to, say, $X$, and the remainder $1-\lambda$ to $Y$. Under a convex risk measure, this diversification cannot increase risk.

\begin{prop}[Convexity of CED]
Conditional Expected Drawdown is convex with respect to portfolio weights: for all $X, Y \in \Rspace^\infty$, $\lambda \in [0,1]$, and confidence level $\alpha\in(0,1)$, $\CED_\alpha(\lambda+ (1-\lambda)Y) \leq \lambda \CED_\alpha(X) + (1-\lambda)\CED_\alpha(Y)$. 
\end{prop}

\begin{proof}
For $\lambda\in[0,1]$, we have $M^{(\lambda X+(1-\lambda)Y)}\leq \lambda M^{(X)} + (1-\lambda)M^{(Y)}$ by properties of the supremum, and therefore 
\begin{eqnarray*}
D^{(\lambda X+(1-\lambda)Y)} &=& M^{(\lambda X+(1-\lambda)Y)} -\lambda X -(1+\lambda)Y \\
 &\leq& \lambda M^{(X)} + (1-\lambda)M^{(Y)}-\lambda X -(1+\lambda)Y \\
 &=& \lambda D^{(X)} + (1-\lambda)D^{(Y)}
\end{eqnarray*}
Assuming that the distributions of $\mu(X)$ and $\mu(Y)$ are continuous, and because $\mu(X)$ is defined as the supremum within the drawdown path $D$, we have $\mu(\lambda X+(1-\lambda)Y) \leq \lambda \mu(X) +(1-\lambda)\mu(Y)$. Finally, since the tail mean functional $\TM$ is subadditive and positive homogenous independent of the underlying distribution (see \citet{AcerbiTasche2002b, AcerbiTasche2002a}), and also monotonically non-decreasing, its composite with $\mu$ is also convex, and so $\CED_\alpha(\lambda+ (1-\lambda)Y) \leq \lambda \CED_\alpha(X) + (1-\lambda)\CED_\alpha(Y)$. 
\end{proof}

\begin{rem}[Drawdown risk optimization]
Convexity of CED implies that one can, in theory, allocate assets to trade off CED risk against portfolio return. There are three crucial ingredients for carrying out any optimization in practice. Convexity of the objective function to be minimized ensures that the minimum, if it exists, is a global one. The second ingredient is the feasibility and efficiency of the optimization algorithm.\footnote{\doublespacing Another crucial ingredient is having a reliable risk model feeding the optimizer with realistic and useful scenarios. This being beyond the scope of the present article, we have focused on the two main theoretical requirements in the present article. We refer the reader to \citet{Uryasev2014b}, where the theory of risk estimation and error sensitivity in the context of portfolio optimization is discussed.}  
Seminal work of \citet{RockafellarUryasev2000, RockafellarUryasev2002}, who developed an efficient linear programming (LP) algorithm for minimizing the tail mean of a distribution of returns, and of \citet{Chekhlov2003,Chekhlov2005}, who incorporated drawdown into the LP formulation, can in theory be used to minimize the tail mean of a maximum drawdown distribution. The third ingredient, which allows us to move beyond theory, is an empirically sound estimate of risk.  Further empirical exploration of the properties of CED and the study of its impact on quantitative portfolio construction, are necessary and beyond the scope of this article.\\
\end{rem}

\subsection{Positive Homogeneity of CED}
 
Degree-one positive homogenous risk measures are characterized by Euler's homogenous function theorem, and hence play a prominent role in portfolio risk analysis. More precisely, for a portfolio $P = \sum_i w_i X_i$ in $\Rspace^\infty$, a risk measure $\rho : \Rspace^\infty \to \R$ is postive homogenous of degree one if and only if $\sum_i w_i \left(\partial \rho(P)\right)/(\partial w_i) = \rho(P)$.\footnote{This formula and the topic of risk attribution is discussed in more detail in Section 4.} The risk $\rho(P)$ of the portfolio $P = \sum_i w_i X_i$ can therefore be linearly attributed along its factors $X_i$.

\begin{prop}[Positive homogeneity of CED]
Conditional Expected Drawdown is degree-one positive homogenous with respect to portfolio weights: for all $X \in \Rspace^\infty$, $\lambda > 0$ and confidence level $\alpha\in(0,1)$, $\CED_\alpha(\lambda X) = \lambda \CED_\alpha(X)$. 
\end{prop}

\begin{proof}
For $\lambda>0$, we have for $t\in[0,T]$, $M_t^{(\lambda X)} = \sup_{u\in[0,t]}(\lambda X)_u = \lambda\sup_{u\in[0,t]}(X)_u = \lambda M_t^{(X)}$, and therefore $D^{(\lambda X)} = \lambda M^{(X)} - \lambda X = \lambda D^{(X)}$. Because $\mu(X)$ is defined as the supremum within the drawdown path $D$, we have $\mu(\lambda X) = \lambda \mu(X)$. Finally, positive homogeneity of the tail mean functional yields the result.\\
\end{proof}

%%%%%%%%%%%%%%%%%%%%%%%%%%%%%%%%%%%%%%%%%%%%%%%%%%%%%%

\section{Drawdown Risk Attribution}
\label{section:attribution}

With the theoretical framework of drawdown risk measurement in place, the next step is to understand how Conditional Expected Drawdown can be integrated in the investment process. 
We show how to systematically analyze the sources of drawdown risk within a portfolio and how these sources interact. In practice, investors may be interested in attributing risk to individual securities, asset classes, sectors, industries, currencies, or style factors of a particular risk model. In what follows, we assume a generic such risk factor model.

Fix an investment period and let $F_i$ denote the return of factor $i$ over this period ($1\leq i \leq n$). Then the portfolio return over the period is given by the sum
$$ P=\sum_{i=1}^n w_i F_i \ ,$$
where $w_i$ is the portfolio exposure to factor $i$ and the summand representing idiosyncratic risk is not included for simplicity. Because portfolio risk is not a weighted sum of source risks, there is no direct analog to this decomposition for risk measures. However, there is a parallel in terms of \emph{marginal risk contributions} (MRC), which are interpreted as a position's percent contribution to overall portfolio risk. They provide a mathematically and economically sound way of decomposing risk into additive subcomponents. 

For a risk measure $\rho$, the marginal contribution to risk of a factor is the approximate change in overall portfolio risk when increasing the factor exposure by a small amount, while keeping all other exposures fixed.\footnote{Risk contributions have become part of the standard toolkit for risk management, and they are used for risk budgeting and capital allocation. We refer the reader to \citet{Tasche2000}, \citet{Kalkbrener2005}, \citet{Denault2001}, and \citet{Qian2006} for more details.}
Formally, marginal risk contributions can be defined for any differentiable risk measure~$\rho$. 

\begin{defn}
For a factor $F_i$ in the portfolio $P=\sum_i w_i F_i$, its \emph{marginal risk contribution} $\mrc_i$ is the derivative of the underlying risk measure $\rho$ along its exposure $w_i$: 
$$\mrc_i^\rho(P) = \frac{\partial \rho(P)}{\partial w_i}.$$
If $\rho$ is homogenous of degree one, the overall portfolio risk can be decomposed using Euler's homogoneous function theorem as follows:
$$\sum_i w_i \mrc_i^\rho(P) = \sum_i \rc_i^\rho(P) = \rho(P),$$
where $\rc_i^\rho(P) = w_i \mrc_i^\rho(P)$ is the $i$-th total \emph{risk contribution} to $\rho$. Finally, \emph{fractional risk contributions} $$\frc_i^\rho(P) = \frac{\rc_i^\rho(P)}{\rho(P)}$$ denote the fractional
contribution of the $i$-th factor to portfolio risk.
\end{defn}

Risk contributions implicitly define a notion of correlation that is general enough to be defined for any risk measure. The \emph{generalized risk-based correlation} $\corr_i^\rho$ for a generic risk measure $\rho : \mathcal{M} \to \R$ between the portfolio and the $i$th asset $X_i$ is defined by: 
$$ \corr_i^\rho = \frac{\mrc_i^\rho(P)}{\rho(X_i)}  .$$
Generalized correlations are monotonically decreasing in position weight. Factoring out the $i$th marginal risk $\rho(X_i)$ from the $i$th risk contribution $\rc_i(P)$, we obtain the generalized form of the ``X-Sigma-Rho" decomposition of \citet{MencheroPoduri2008}:
$$ \rc^\rho_i(P) =  w_i \rho(X_i) \frac{\mrc^\rho_i(P)}{\rho(X_i)} = w_i \rho(X_i) \corr^\rho_i \quad .$$
We refer the reader \citet{GoldbergMencheroHayesMitra2010} for a more detailed development of generalized correlations.

\subsection{Drawdown Risk Contributions}

\citet{MencheroPoduri2008} and \citet{GoldbergMencheroHayesMitra2010} developed a standard toolkit for analyzing portfolio risk using a framework centered around marginal risk contributions. By integrating drawdown risk into this framework, investors can estimate how a trade would impact the overall drawdown risk of the portfolio. Because  Conditional Expected Drawdown is positive homogenous, the individual factor contributions to drawdown risk add up to the overall drawdown risk within a path $P\in\Rspace^\infty$ of returns to a portfolio with values at time $t\leq T$ given by $P_t = \sum_i w_i F_{i,t}$\footnote{The process corresponding to the $i$-th factor is written $F_i$, and its instance at time $t\in[0,T]$ is denoted by $F_{i,t}$.}:
\begin{equation}
\CED_\alpha(P) = \sum_i {w_i {\mrc}_i^{\CED_\alpha}(P)} , \quad \alpha \in [0,1].
\label{eqn:ced_decomp_mrc}
\end{equation}

Recall that a marginal risk contribution is a partial derivative, and so practitioners can implement Formula~\ref{eqn:ced_decomp_mrc} using numerical differentiation.  However, this tends to introduce noise. We next show that an individual marginal contribution to drawdown risk can be expressed as an integral, and this reduces noise, since integration is a smoothing operator.\footnote{This is analogous to marginal contributions to Expected Shortfall, which can also be expressed as integrals; see \cite{Tasche2000} and \cite{Tasche2002} where it is shown that for quantile based risk measures (such as VaR and ES, but also spectral measures), an Euler attribution can be expressed as an intuitive expectation.}
Indeed, the individual marginal contribution ${\mrc}_i^{\CED_\alpha}$ of the $i$-th factor to overall portfolio drawdown risk $\CED_\alpha(P)$ is given by the expected drop of the $i$-th factor in the interval $[s^*,t^*] \subset[0,T]$ where the overall portfolio maximum drawdown $\mu(P)$ occurs, given that the maximum drawdown of the overall portfolio exceeds the drawdown threshold. This definition is analogous to the marginal contribution to shortfall, and we formalize it next.

\begin{prop}
Marginal contributions to drawdown risk are given by:
\begin{equation}
 \mrc_i^{\CED_\alpha} (P) = \E \left[ \left(F_{i,t^*} - F_{i,s^*} \right) \mid \mu(P)  > \mathrm{DT}_\alpha(P) \right] ,
\label{eqn:ced_mrc}
\end{equation}
where $\CED_\alpha(P)$ is the overall portfolio CED, $\mu(P)$ is the maximum drawown random variable, $\rm{DT}_\alpha(P)$ is the portfolio maximum drawdown threshold at $\alpha$, and $s^*<t^*\leq T$ are random times such that:
$$ \mu(P) = P_{t^*} - P_{s^*}, $$  
and we assume that the maximum drawdown of $P= \sum_i w_i F_{i}$ is strictly positive.
\label{prop:ced_mrc}
\end{prop}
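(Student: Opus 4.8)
The plan is to compute the partial derivative $\mrc_i^{\CED_\alpha}(P_{T_n}) = \partial \CED_\alpha(P_{T_n})/\partial w_i$ directly, exploiting the fact that $\CED_\alpha = \TM_\alpha \circ \boldsymbol\mu$ is Expected Shortfall applied to the maximum-drawdown random variable $\boldsymbol\mu(P_{T_n})$, and then invoking the standard differentiation rule for tail means. Two ingredients are needed. The first is the derivative of the drawdown functional in the weights. Writing $P_{t_j} = \sum_i w_i X_{i,t_j}$, the maximum drawdown
$$ \boldsymbol\mu(P_{T_n}) = \max_{1\leq j<k\leq n} \Bigl\{ \sum_i w_i \bigl( X_{i,t_k} - X_{i,t_j} \bigr), \, 0 \Bigr\} $$
is a pointwise maximum of finitely many functions that are each linear in the weight vector $\mathbf{w} = (w_1, \dots, w_n)$. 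By an envelope (Danskin) argument, at any realization where the maximizing pair $(j^*, k^*)$ is unique and the maximum is strictly positive, $\boldsymbol\mu$ is differentiable in $\mathbf{w}$ and its gradient equals that of the active linear piece, so that $\partial \boldsymbol\mu(P_{T_n}) / \partial w_i = X_{i,t_{k*}} - X_{i,t_{j*}}$.

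The second ingredient is the sensitivity of the tail mean. Since $\TM_\alpha$ coincides with Expected Shortfall, the marginal-contribution formula
$$ \frac{\partial}{\partial w_i} \TM_\alpha(Z) = \E\left[ \frac{\partial Z}{\partial w_i} \,\Big|\, Z > q_\alpha(Z) \right] $$
holds for any weight-dependent family $Z = Z(\mathbf{w})$ with continuous distribution, where $\partial Z/\partial w_i$ denotes the (possibly realization-dependent) local derivative. I would either cite the standard references (e.g.\ \citet{Tasche2000}) or reprove it by differentiating the quantile representation $\TM_\alpha(Z) = \frac{1}{1-\alpha} \int_\alpha^1 q_u(Z)\, du$, using the quantile-sensitivity identity $\partial q_u(Z)/\partial w_i = \E[\partial Z/\partial w_i \mid Z = q_u(Z)]$ and integrating over $u \in (\alpha, 1)$ to reconstitute the tail conditional expectation. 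Applying this with $Z = \boldsymbol\mu(P_{T_n})$, whose lower $\alpha$-quantile is the drawdown threshold $q_\alpha(\boldsymbol\mu(P_{T_n})) = \mathrm{DT}_\alpha(P_{T_n})$ by definition, and substituting the gradient $X_{i,t_{k*}} - X_{i,t_{j*}}$ from the first ingredient yields exactly Formula~\ref{eqn:ced_mrc}.

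The delicate step, and the main obstacle, is justifying the interchange of differentiation with the conditional expectation, i.e.\ showing that the boundary term at the quantile vanishes. This requires continuity of the distribution of $\boldsymbol\mu(P_{T_n})$, so that the drawdown threshold is well defined and carries no atom, together with enough integrability of the factor return paths to differentiate under the expectation. One must also address the non-smoothness of $\boldsymbol\mu$: the maximum fails to be differentiable where the optimal peak--trough pair $(j^*, k^*)$ is not unique, but under a continuous joint distribution such ties form a set of measure zero, so $\partial \boldsymbol\mu / \partial w_i$ exists almost surely and the envelope formula applies. The strict-positivity assumption on the maximum drawdown is precisely what removes the remaining kink, the one coming from the outer $\max\{\cdot, 0\}$.
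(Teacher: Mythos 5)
Your proposal is correct and is essentially the paper's own argument: both proofs rest on (i) the local linearity of $\boldsymbol\mu(P_{T_n})$ in the weights over the realized peak--trough pair $(j^*,k^*)$, so that the relevant derivative is $X_{i,t_{k^*}}-X_{i,t_{j^*}}$, and (ii) the tail-mean differentiation rule in which the term coming from the quantile $\mathrm{DT}_\alpha$ vanishes (the paper cites Bertsimas et al.\ for this step where you cite Tasche). If anything, your version is more careful than the paper's, which differentiates inside the conditional expectation while treating $(j^*,k^*)$ as fixed and never addresses non-uniqueness of the maximizing pair, continuity of the maximum-drawdown distribution, or the kink from the outer $\max\{\cdot,0\}$ --- precisely the gaps your envelope (Danskin) argument, measure-zero tie discussion, and use of the strict-positivity assumption close.
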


\begin{proof}
We use the results of \cite{Tasche2002}, \citet{GoldbergMencheroHayesMitra2010} and \citet{QRM2005}, who show that the $i$-th marginal contribution to Expected Shortfall $\ES_\alpha$ at confidence level $\alpha \in (0,1)$ of a random variable $L = \sum_i w_i Y_i$ representing portfolio loss is given by
\begin{equation}
\mrc_i^{\ES_\alpha} (L) = \E \left[ Y_i \mid L > \Var_\alpha(L)\right] ,  
\label{eqn:mrc_es}
\end{equation}
where $\Var_\alpha(L)$ denotes the Value-at-Risk of $L$ at $\alpha$, that is the $\alpha$-quantile of the loss distribution $L$. 

We derive an analog to Formula \ref{eqn:mrc_es}.
Assuming that the maximum drawdown of $P = \sum_i w_i F_{i}$ is strictly positive, let 
$$ \mu(P) =  P_{t^*} - P_{s^*} $$  
for some $s^*<t^*\leq T$. Then the $i$-th marginal contribution $ \mrc_i^{\CED_\alpha} (P) $ to overall portfolio drawdown risk $\CED_\alpha (P)$ is given by

\begin{eqnarray}
 \mrc_i^{\CED_\alpha} (P) &=& \frac{\partial }{\partial w_i} \left( \TM_\alpha \left( \mu(P) \right) \right)  \nonumber \\
&=& \frac{\partial }{\partial w_i}  \E \left[\mu(P) \mid \mu(P)  > \mathrm{DT}_\alpha(P)\right] \nonumber \\
&=& \frac{\partial }{\partial w_i}  \E \left[ (P_{t^*} - P_{s^*}) \mid \mu(P)  > \mathrm{DT}_\alpha(P)\right] \nonumber \\
&=& \frac{\partial }{\partial w_i}  \E \left[ \left(\sum_{i=1}^n w_i F_{i,t^*} - \sum_{i=1}^n w_i F_{i,s^*}  \right) \mid \mu(P)  > \mathrm{DT}_\alpha(P)\right] \nonumber \\
&=& \frac{\partial }{\partial w_i} \E \left[ \sum_{i=1}^n w_i \left( F_{i,t^*} - F_{i,s^*} \right) \mid \mu(P)  > \mathrm{DT}_\alpha(P) \right] \nonumber \\
\label{eqn:mrc_ced_derivation}
&=& \frac{\partial }{\partial w_i}  \left( \sum_{i=1}^n w_i \E \left[ \left(F_{i,t^*} - F_{i,s^*} \right) \mid\mu(P)  > \mathrm{DT}_\alpha(P) \right] \right) \\ \nonumber
 \end{eqnarray}

Using the fact that the partial derivative with respect to a quantile is zero, as discussed by \citet{Bertsimas2004}, Formula \ref{eqn:mrc_ced_derivation} simplifies to:
$$ \mrc_i^{\CED_\alpha} (P) = \E \left[ \left(F_{i,t^*} - F_{i,s^*} \right) \mid \mu(P)  > \mathrm{DT}_\alpha(P) \right] .$$

Finally, note that the variables $s^*$ and $t^*$ are stochastic. This means that in a Monte Carlo simulation of a discretized version of this problem, they will take on a different value scenario by scenario. 
\end{proof}

%%%%%%%%%%%%%%%%%%%%%%%%%%%%%%%%%%%%%%%%%%%%%%%%%%%%%%

\section{Empirical Analysis of Drawdown Risk}
\label{section:empirical}

We analyze historical values of Conditional Expected Drawdown based on daily data for two asset classes: US Equity and US Government Bonds. The US Government Bond Index we use\footnote{See Appendix A for details on the data and their source.} includes fixed income securities issued by the US Treasury (excluding inflation-protected bonds) and US government agencies and instrumentalities, as well as corporate or dollar-denominated foreign debt guaranteed by the US government, with maturities greater than 10 years. These include government agencies such as the Federal National Mortgage Association (Fannie Mae) and the Federal Home Loan Mortgage Corporation (Freddie Mac) without an explicit guarantee. In comparison to US \emph{Treasury} Bond Indices, US Government Bond Indices were highly volatile and correlated with US Equities during the financial crisis of 2008. The effect of this will be seen in our empirical analysis.\footnote{We thank Robert Anderson for pointing out the important distinction between US Government Bond and US Treasury Bond Indices.}
Summary risk statistics for the two asset classes and three fixed-mix portfolios are shown in Table \ref{stats}.

\subsection{Time-varying Drawdown Risk Concentrations}

\begin{table}[t]
\centering 
\begin{tabular}{l || r r r r r } 
\hline 
 & Volatility & ES$_{0.9}$ & CED$_{0.9}$ (6M-paths) & CED$_{0.9}$ (1Y-paths) & CED$_{0.9}$ (5Y-paths)\\ [0.5ex] % inserts table
\hline \hline
US Equity & 18.35\%  & 2.19\% & 47\% & 51\% & 57\% \\ 
US Bonds &  5.43\%  & 0.49\% & 29\% & 32\% & 35\% \\
50/50      &   9.53\%  & 1.30\% & 31\% & 32\% & 35\% \\
60/40      &   11.12\%  & 1.35\% & 33\% & 35\% & 38\% \\
70/30      &  12.92\% & 1.40\% & 36\% & 40\% & 44\% \\
\hline
\end{tabular}
\vspace{20pt}
\caption{\onehalfspacing Summary statistics for daily US Equity and US Bond Indices and three fixed-mix portfolios over the period 1 January 1982 to 31 December 2013. Expected Shortfall and Conditional Expected Drawdown are calculated at the 90\% confidence level. Three drawdown risk metrics are calculated by considering the maximum drawdown within return paths of different fixed lengths (6 months, 1 year and 5 years).} 
\label{stats} 
\end{table}

Using the definition of marginal contributions to Conditional Expected Drawdown (derived in Proposition \ref{prop:ced_mrc}), we look at the time varying contributions to CED.
Figure \ref{frc_dd} displays the daily 6-month rolling fractional contributions to drawdown risk $\CED_{0.9}$ (at the 90\% threshold of the 6-month maximum drawdown distribution) of the two asset classes (US Equity and US Bonds) in the balanced 60/40 allocation.\footnote{See Appendix B for details on the  risk estimation and portfolio construction methodologies used. Note also that similar effects can be seen in other fixed-mix portfolios, such as the equal-weighted 50/50 portfolio and the 70/30 allocation. In the following empirical analyses, we will be focusing exclusively on the traditional 60/40 allocation.} Between 1982 and 2008, and between 2012 and 2013, the contributions of US Equity to overall drawdown risk fluctuated between 80\% and 100\%. Note that this period includes two of the three turbulent market regimes that occurred during this 30-year window, namely the 1987 stock market crash and the burst of the internet bubble in the early millennium. During the credit crisis of 2008, however, we see, unexpectedly, that bonds contributed almost as much as equities to portfolio drawdown risk. 

\begin{figure}
\centering
  \includegraphics[width=0.7\linewidth]{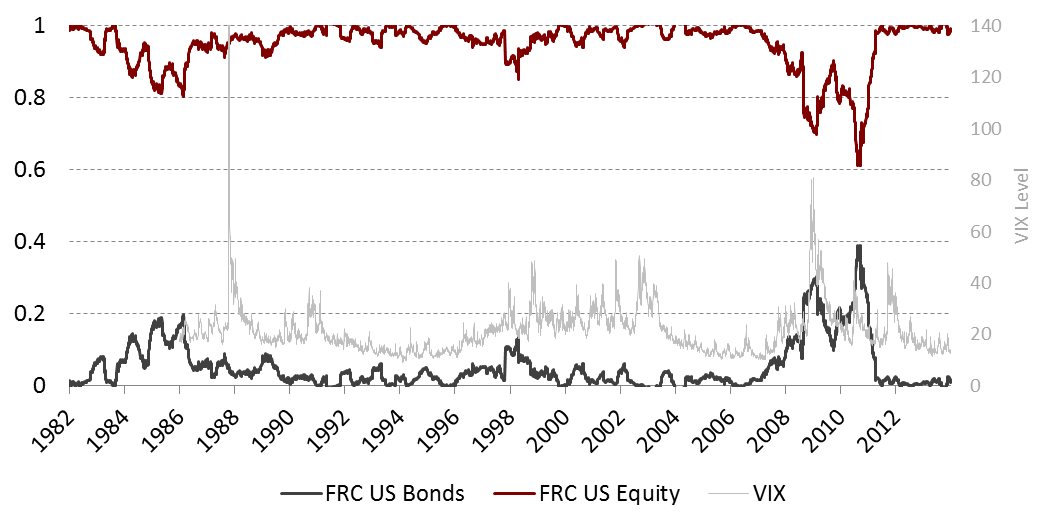}
\caption{\onehalfspacing Daily 6-month rolling Fractional Risk Contributions (FRC) along the 90\% Conditional Expected Drawdown (CED) of US Equity and US Bonds to the balanced 60/40 portfolio. Also displayed is the daily VIX series over the period 1982 -- 2013, with the right-hand axis indicating its level.}
\label{frc_dd}
\end{figure}

\begin{figure}
\centering
  \includegraphics[width=0.95\linewidth]{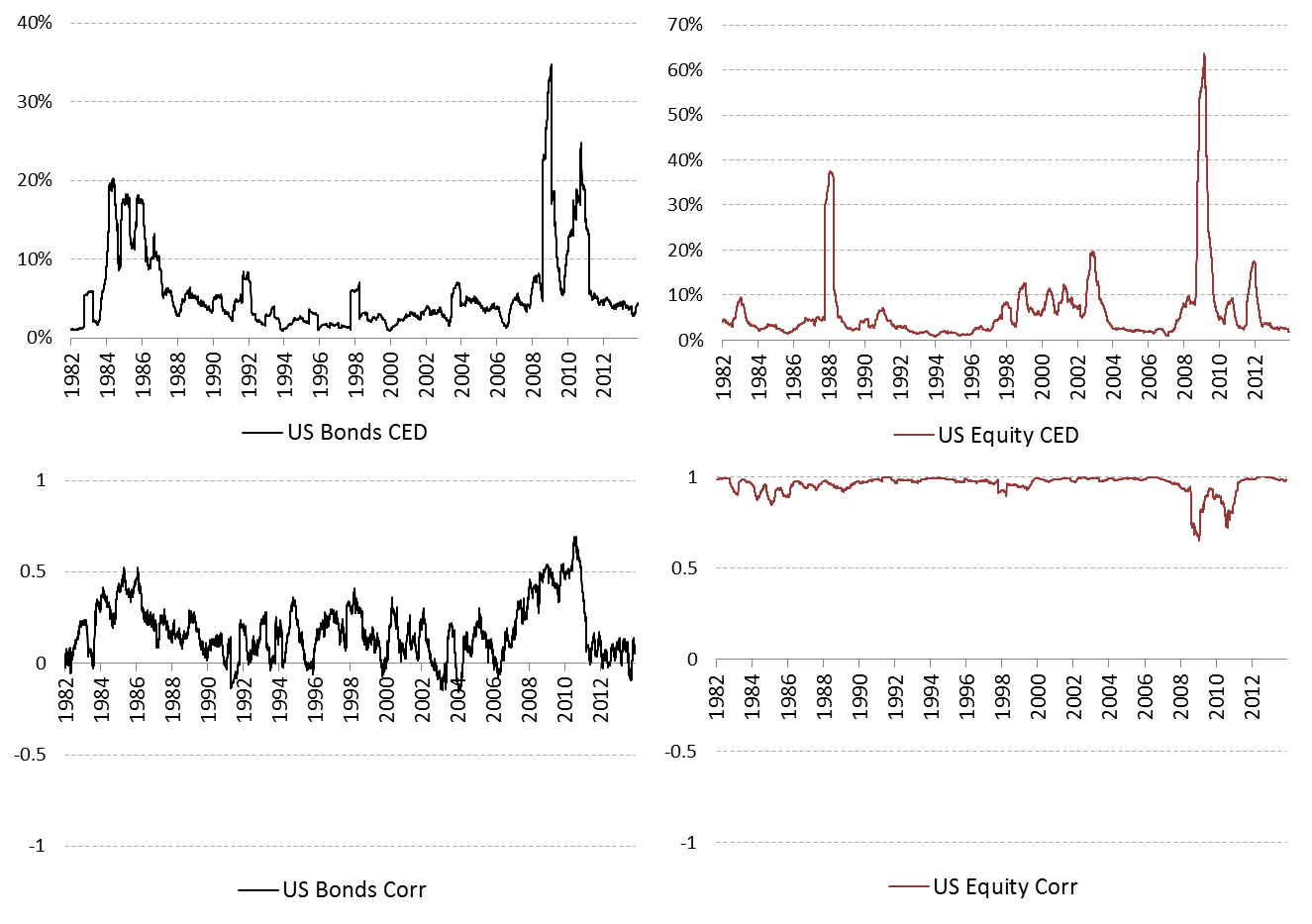}
\caption{\onehalfspacing Decomposition of the individual contributions to drawdown risk $ \rc^\CED_i(P) = w_i \CED(X_i) \corr^\CED_i$ for the 60/40 allocation to US Equity and US Bonds over the period 1982 -- 2013. The top two panels show the daily 6-month rolling standalone 90\% Conditional Expected Drawdown (CED) of the two asset classes, while the bottom two panels show the daily 6-month rolling generalized correlations of the individual assets along CED.}
\label{rc_decomp}
\end{figure}

Our analysis shows little connection between market turbulence and drawdown risk concentration in  the 60/40 fixed mix of US Equity and US Bonds.  Notably, the most equitable attribution of drawdown risk occurred during the 2008 financial crisis. This can be explained by the inclusion of bonds issued by  Fannie Mae and Freddie Mac in the  US Government Bond Index. In calm regimes, these Agency Bonds tended to be  correlated with US Treasury bonds, but during the financial crisis, Agency Bonds were more correlated with US Equity.  For comparison, we provide the same analysis when the underlying Bond Index used is the US Treasury Bond Index (see Figures \ref{frc_dd_2} and \ref{rc_decomp_2} in Appendix B). In this case, as one would expect, the least equitable attribution of drawdown risk occurred during turbulent market periods.

\begin{comment}
Overall, our analysis shows little connection between market turbulence and the equitability of fractional risk contributions to US Equity and US Bonds in a 60/40 fixed mix. Notably, however, the most equitable attribution of drawdown risk occurred during the 2008 financial crisis. This can be explained by the nature of the Government Bond Index chosen, as discussed earlier. For comparison, we provide the same analysis when the underlying Bond Index used is the US Treasury Bond Index (see Figures \ref{frc_dd_2} and \ref{rc_decomp_2} in Appendix B). In this case, as one would expect, the least equitable attribution of drawdown risk occurred during turbulent market periods. 
\end{comment}

 To understand the sources of the risk contributions, particularly during the credit crisis of 2008 where the concentrations of US Equity and US Government Bonds approached parity, we carry out the ``X-Sigma-Rho" decomposition of \citet{MencheroPoduri2008}. Recall from Section 4 that risk contribution is proportional to the product of standalone risk and  generalized correlation. In the case of Conditional Expected Drawdown, this means that:
$$ \rc^\CED_i(P) = w_i \CED(X_i) \corr^\CED_i .$$

Because we are working with a fixed-mix portfolio, the exposures $w_i$ are constant: 0.6 and 0.4 for US Equity and US Bonds, respectively. This means that the time-varying risk contributions of Figure \ref{frc_dd} depend on the time-varying drawdowns ($\CED(X_i)$) and correlations ($\corr^\CED_i$). Figure \ref{rc_decomp} displays these for each of the two assets in our 60/40 portfolio. Observe that during the 2008 financial crisis, both the drawdown risk contribution of US Bonds and its generalized correlation  were elevated relative to the subsequent period. On the other hand, the generalized correlation of US Equity during the 2008 crisis decreased.  The combination of these effects may have driven the changes in the drawdown contributions of US Bonds and US Equity during the 2008 crisis.\footnote{For comparison, we include in Figure \ref{es_decomp} of Appendix C the risk decomposition along Expected Shortfall.}

In Section \ref{section:serial}, we give a statistical analysis that supports the economic explanation of the increased CED values for US Government Bonds. In practice, investors can efficiently control such regime-dependent fluctuations in drawdown risk concentrations since Conditional Expected Drawdown is a convex risk measure; that is both the return path and the drawdown path are convex with respect to asset weights. Hence, they are convex functions of factors that are linear combinations of asset weights. This implies that reducing the portfolio exposure to an asset or factor in a linear factor model decreases its marginal contribution to overall portfolio drawdown.

It is possible for a portfolio to have equal risk contributions with respect to one measure while harboring a substantial concentration with respect to another.\footnote{Risk parity portfolios, which are constructed to equalize risk contributions, have been  popular investment vehicles in the wake of the 2008 financial crisis (see \citet{AndersonBianchiGoldberg2012} and \citet{AndersonBianchiGoldberg2013}).  This is in spite of the fact that there may be no theoretical basis for the construction. } Figure \ref{frc_parity} illustrates such a case. Four portfolios are constructed to be maximally diversified along the following risk measures: volatility, Expected Shortfall, and Conditional Expected Drawdown. The underlying asset classes are US Equity and US Government Bonds as before from 1982 to 2013.\footnote{See Appendix B for details on the data, risk estimation, and portfolio construction methodologies used.} We refer to these as being in \emph{parity} with respect to the underlying risk measure. The confidence level for both ES and CED is fixed at 90\%. Figure \ref{frc_parity} shows fractional risk contribution of the equity component to each of three risk measures in three types of risk parity portfolios. Concentrations in terms of drawdown risk, in particular, are revealed. For instance, even though the ES Parity portfolio, which has equal contributions to Expected Shortfall, is constructed to minimize downside risk concentrations, it turns out to have 75\% of its drawdown risk concentrated in US Equity.

\begin{figure}
\centering
  \includegraphics[width=0.7\linewidth]{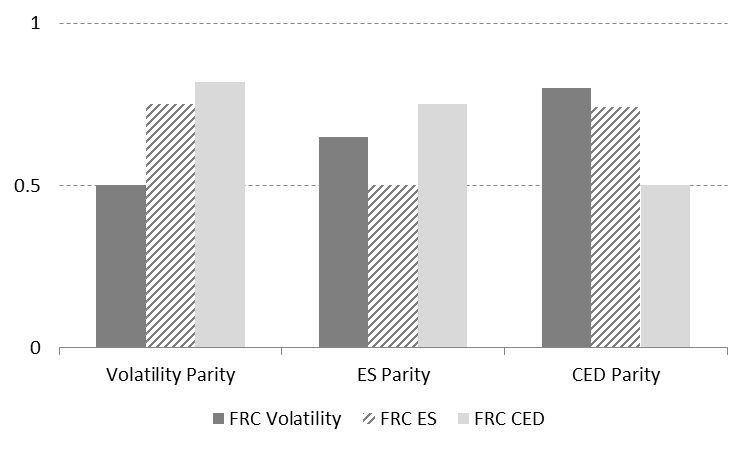}
\caption{\onehalfspacing Fractional Risk Contributions (FRC) of US Equity measured along three different risk measures (volatility, 90\% Expected Shortfall and 90\% Conditional Expected Drawdown) for the following two-asset portfolios consisting of US Equity and US Bonds over the period 1982--2013: Volatility Parity, ES Parity and CED Parity. Each parity portfolio is constructed to have equal risk contributions along its eponymous risk measure.}
\label{frc_parity}
\end{figure}

%%%%%%%%%%%%%%%%%%%%%%%%%%%%%%%%%%%%%%%%%%%%%%%%%%%%%%

\subsection{Drawdown Risk and Serial Correlation}\label{section:serial}

One advantage of looking at maximum drawdown distributions rather than return distributions, and thus Conditional Expected Drawdown rather than Expected Shortfall, lies in the fact that drawdown is inherently path dependent. In other words, drawdown measures the degree to which losses are sustained, as small but persistent cumulative losses may still lead to large drops in portfolio net asset value, and hence may force liquidation. On the other hand, volatility and Expected Shortfall fail to distinguish between intermittent and consecutive losses. We show that, to a greater degree than these two risk measures, Conditional Expected Drawdown captures temporal dependence. Moreover, the effect of serial correlation on drawdown risk can  be seen in the drawdown risk contributions. \\

\textbf{An increase in serial correlation increases drawdown risk.}
To see how temporal dependence affects risk measures, we use Monte Carlo simulation to generate an autoregressive AR(1) model:
$$ r_t = \kappa r_{t-1}  + \epsilon_t ,$$
with varying values for the autoregressive parameter $\kappa$ (while $\epsilon$ is Gaussian with variance 0.01), and calculate volatility, Expected Shortfall, and Conditional Expected Drawdown of each simulated autoregressive time series. Figure \ref{kappa} displays the results. All three risk measures were affected by the increase in the value of the autoregressive parameter, but the increase is steepest by far for CED. We next use maximum likelihood to fit the AR(1) model to the daily time series of US Equity and US Government Bonds on a 6-month rolling basis to obtain time series of estimated $\kappa$ values for each asset. The correlations of the time series of $\kappa$  with the time series of 6-month rolling volatility, Expected Shortfall, and Conditional Expected Drawdown are shown in Table \ref{kappa_corr}. The correlations are substantially higher for US Equity across all three risk measures. Note that for both asset classes, the correlation with the autoregressive parameter is highest for CED. Figure \ref{kappa_ced} contains the scatter plots of estimated $\kappa$ parameters for US Equity and US Bonds against their CED. \\

\begin{figure}
\centering
  \includegraphics[width=0.6\linewidth]{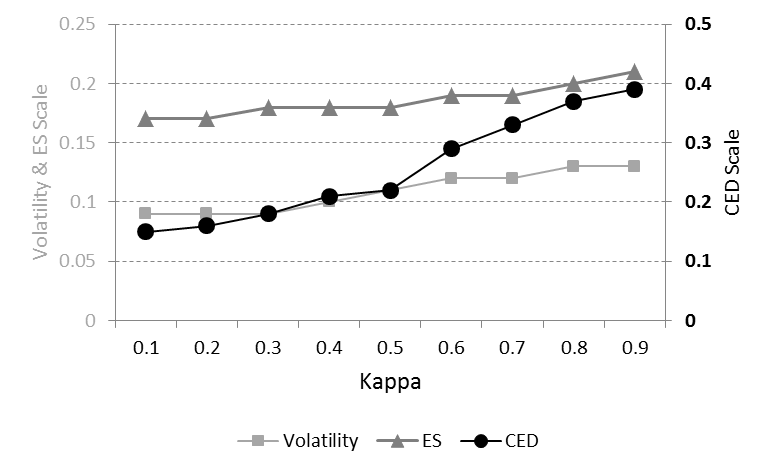}
\caption{\onehalfspacing Volatility, 90\% Expected Shortfall (ES), and 90\% Conditional Expected Drawdown (CED) of a Monte Carlo simulated AR(1) model (with 10,000 data points) for varying values of the autoregressive parameter $\kappa$.}
\label{kappa}
\end{figure}

\begin{table}
\centering 
\begin{tabular}{l || r r r } 
\hline 
 & Volatility & ES$_{0.9}$ & CED$_{0.9}$ \\ [0.5ex]
\hline \hline
US Equity & 0.47  & 0.52 & 0.75 \\ 
US Bonds &  0.32  & 0.39 & 0.69  \\
\hline 
\end{tabular}
\vspace{20pt}
\caption{\onehalfspacing For the daily time series of each of US Equity and US Government Bonds, correlations of estimates of the autoregressive parameter $\kappa$ in an AR(1) model with the values of the three risk measures (volatility, 90\% Expected Shortfall and 90\% Conditional Expected Drawdown) estimated over the entire period (1982--2013).} 
\label{kappa_corr} 
\end{table}

\begin{figure}
\centering
  \includegraphics[width=\linewidth]{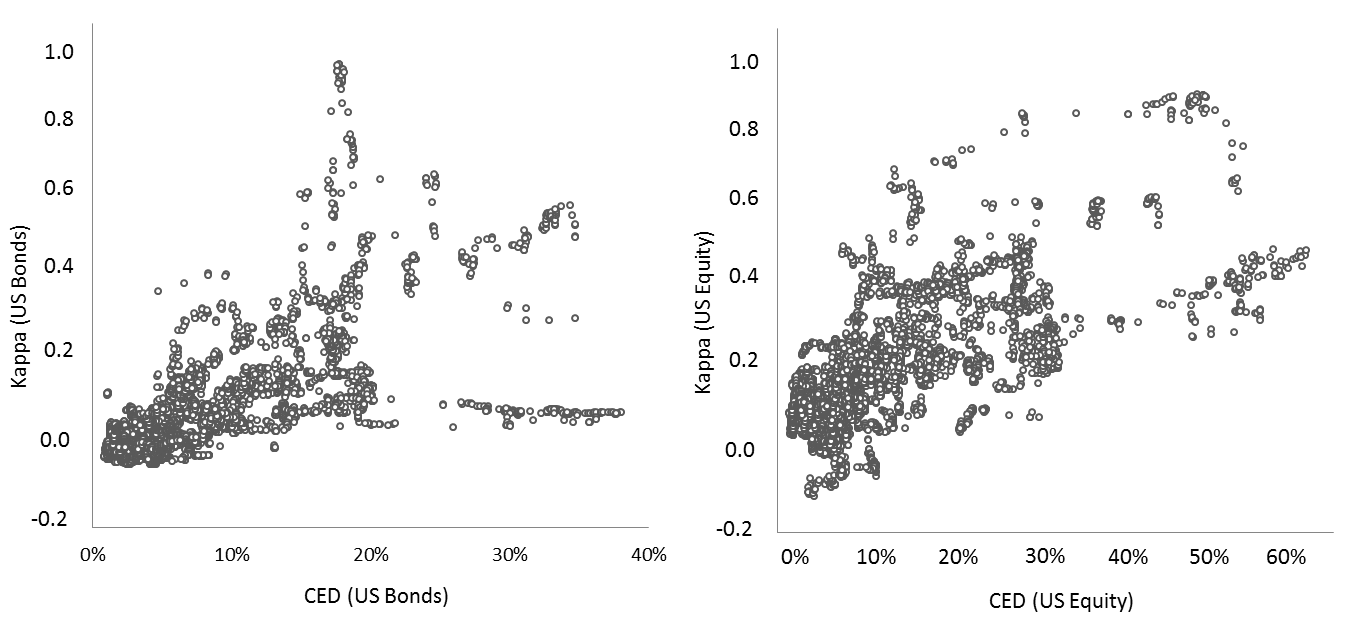}
\caption{\onehalfspacing For each of US Equity and US Government Bonds, scatter plots of the daily time series of 6-month rolling estimates of the autoregressive parameter $\kappa$ with the 6-month rolling estimates of 90\% Conditional Expected Drawdown.}
\label{kappa_ced}
\end{figure}

\textbf{An increase in serial correlation increases drawdown risk concentrations.}
We now show how temporal dependence is manifest in the drawdown risk contributions. Figure \ref{frc_data} shows the fractional risk contributions over the entire period 1982--2013 of US Equity to the balanced 60/40 portfolio for three risk measures, volatility, ES, and CED, based on daily data. The fractional contributions of  US Equity to volatility and ES were large (over 90\%) and close in magnitude. For CED, however, the concentration was less pronounced, which means that the  contribution of US Bonds to drawdown risk exceeded its contribution to volatility and shortfall risk.  A candidate explanation is temporal dependence: while bonds systematically have lower volatility and shortfall risk than do equities, they do occasionally suffer from extended periods of consecutive losses.

To test this hypothesis, we simulate the returns $r_E$ and $r_B$ to two assets $E$ and $B$ representing equities and bonds, respectively, with an autoregressive AR(1) model:
$$ r_{E,t} = \kappa_E r_{E,t-1}  + \epsilon_{E,t} ,$$
and 
$$ r_{B,t} = \kappa_B r_{B,t-1}  + \epsilon_{B,t} ,$$
and we construct a simulated 60/40 fixed-mix portfolio.
The  AR(1) model parameters are obtained by calibrating  to daily time series of US Equity and US Bonds. The estimated autoregressive parameters are $\kappa_E = 0.43$ and $\kappa_B = 0.35$. We assume the  $\epsilon$ variable is  Gaussian, with volatility of 18.4\% for asset E (based on the volatility of US Equity)  and 5.5\% for asset B, (based on the volatility of US Bonds). From the simulated data, we fit AR(1) models and their fractional contributions to  volatility, ES and CED. When using only the residuals, we obtain statistically equal risk contributions since the innovations are Gaussian. However, without removing the autoregressive component,  contributions to CED once again differ from contributions to volatility and ES. Figure \ref{frc_sim} displays the corresponding fractional contributions of the more volatile asset class, $E$,   to the three risk measures. Note that the two panels in Figure \ref{frc_comparison} are visually indistinguishable even though one is based on  historical data, whereas the other is simulated.

\begin{figure}
\centering
\begin{subfigure}{.5\textwidth}
  \centering
  \includegraphics[width=\linewidth]{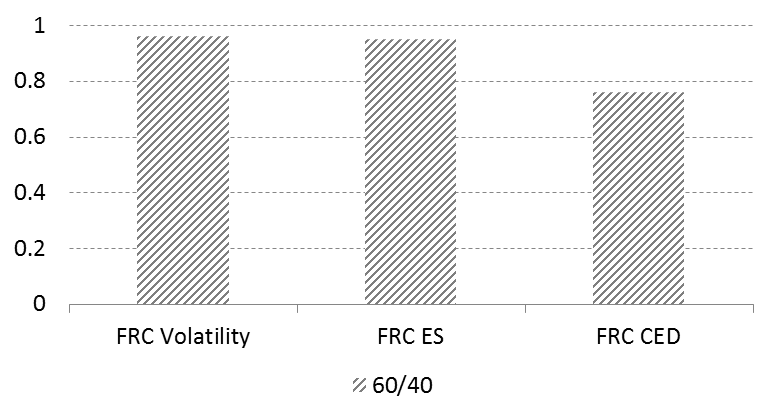}
  \caption{}
  \label{frc_data}
\end{subfigure}%
\begin{subfigure}{.5\textwidth}
  \centering
  \includegraphics[width=\linewidth]{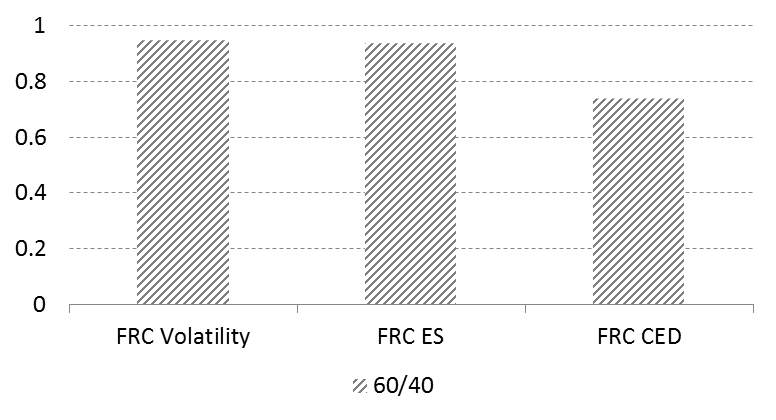}
  \caption{}
  \label{frc_sim}
\end{subfigure}
\caption{\onehalfspacing (A) Fractional contributions over the entire period 1982--2013 of the US Equity asset to volatility, 90\% Expected Shortfall and 90\% Conditional Expected Drawdown in the 60/40 portfolio, based on daily data. (B) Fractional contributions of the simulated high-volatility AR(1) asset to volatility, 90\% Expected Shortfall and 90\% Conditional Expected Drawdown in the 60/40 portfolio.}
\label{frc_comparison}
\end{figure}

%%%%%%%%%%%%%%%%%%%%%%%%%%%%%%%%%%%%%%%%%%%%%%%%%%%%%%
\section{Drawdown: From Practice to Theory and Back Again}
\label{section:conclusion}

Financial practitioners  rely on maximum drawdown as an indicator of investment risk. However, due to its inherent path dependency, maximum drawdown has tended to fall outside of probabilistic treatments of investment risk, which focus on return and loss distributions at fixed horizons.  As a result, maximum drawdown has been excluded from  standard portfolio analysis toolkits that attribute risk to factors or asset classes, and that use risk forecasts as counterweights to expected return in portfolio construction routines.

In this article, we develop a new probabilistic measure of drawdown risk, Conditional Expected Drawdown (CED), which is the tail-mean of a drawdown distribution at a fixed horizon.  Since CED is perfectly analogous to the familiar return-based risk measure, Expected Shortfall (ES),  CED is easy for practitioners to interpret and it enjoys desirable theoretical properties of tail-means such as positive degree-one homogeneity and convexity. Thus, the development of a consistent theory for drawdown facilitates an extension of its current practical applications.

The path dependency of  Conditional Expected Drawdown makes it more sensitive to serial correlation than Expected Shortfall or volatility.  We demonstrate this using a simulated AR(1) model. All else equal,   CED increases much more rapidly as a function of the autoregressive parameter $\kappa$ than do Expected Shortfall or volatility.  In an empirical study, we find relatively high correlations between serial correlation and estimated CED (.75 for US Equity, .69 for US Bonds) compared to Expected Shortfall (.52 for US Equity, .39 for US Bonds)  and volatility (.47 for US Equity, .32 for US Bonds).

Since it  is positive degree-one homogenous, CED (like ES and volatility) can be decomposed into a sum of risk contributions, and the relative sensitivity of CED to serial correlation is manifest in risk concentrations. In an empirical study of a balanced 60/40 portfolio of US Equity and US Bonds over the period 1982--2013, US Equity accounted for roughly  75\% of CED, but more than 90\% of ES and volatility. A plausible explanation is the relatively high level of serial correlation in US Bonds.   We support this hypothesis with another simulation:  we replicate the empirically observed  concentrations of CED, ES and volatility using a simulated 60/40 balanced portfolio based on  AR(1) models calibrate to US Equity and US Bonds over the study period.

Since CED is convex, it can serve as a counterweight to expected return in a quantitative optimization.  Exploiting the parallels between Expected Shortfall as a tail-mean of a return distribution and Conditional Expected Drawdown as a tail-mean of a drawdown distribution, one can in theory use the linear programming algorithm developed by \cite{RockafellarUryasev2000, RockafellarUryasev2002}.  

This article lays the foundation needed to incorporate  Conditional Expected Drawdown in the investment process.  Further empirical exploration of the properties of CED,  research into the incremental information it adds beyond what is in return-based risk measures, and the study of its impact on quantitative portfolio construction, are the next steps.

\newpage

%%%%%%%%%%%%%%%%%%%%%%%%%%%%%%%%%%%%%%%%%%%%%%%%%%%%%%

\appendix

\section{Data and Estimation Methodologies}

\subsection{Data}

The data were obtained from the Global Financial Data database. We took the daily time series for the S\&P 500 Index and the USA 10-year Government Bond Total Return Index.

\subsection{Portfolio Construction}

Rather than provide thorough realistic empirical analyses of portfolio risk and return, our goal behind the simulated portfolios is to illustrate this article's theoretical development in relation to drawdown risk. For simplicity, we therefore do not account for transaction costs or market frictions in all hypothetical portfolios constructed throughout this study. Moreover, we assume that all portfolios are fully invested and long only. \\

\textbf{Fixed-mix portfolios.}
In the fixed-mix portfolios, rebalancing to the fixed weights is done on a monthly basis. When comparing to other popular rebalancing schemes (quarterly, bi-annually and yearly), similar results were obtained.  \\

\textbf{Risk parity portfolios.} In risk parity strategies, assets are weighted so their ex post risk contributions are equal. As mentioned in Section 5, parity portfolios are not restricted to volatility only, but can be constructed along other risk measures, such as Expected Shortfall and Conditional Expected Drawdown. Asset weights in the strategies depend on estimates of the underlying risk measures (see Section A.3), which are calculated using a 3-year rolling window of trailing returns. Varying the estimation methodology by changing the length of the rolling window or the weighting scheme applied to the returns within this window did not alter our results substantially. Similar to the fixed-mix portfolios, risk parity portfolios are rebalanced monthly, with other rebelancing schemes yielding similar results.

\subsection{Risk Estimation}

${}$ \\

\textbf{Volatility.} Portfolio volatility is calculated as the annualized standard deviation of the daily time series over the entire period under consideration. To obtain the volatility risk contributions for a $n$-asset portfolio $P = \sum_i w_i X_i$, note that the $i$-th total contribution $\rc_i^\sigma$ to portfolio volatility 
$$\sigma(P) = \sum_i w_i^2 \sigma_i^2 + \sum_i \sum_{j \neq i} w_i w_j \sigma_{i,j}$$
 is 
$$\rc_i^\sigma = w_i^2 \sigma_i^2 + \sum_{j \neq i} w_i w_j \sigma_{i,j},$$ 
where $\sigma_i^2$ is the variance of $X_i$ and $\sigma_{i,j}$ is the covariance of $X_i$ and $X_j$. Then, the $i$-th fractional contribution to volatility is given by
$$  \frc_i^\sigma (P) = \frac{w_i^2 \sigma_i^2 + \sum_{j \neq i} w_i w_j \sigma_{i,j}}{\sigma(P)} \quad .$$

\textbf{Expected Shortfall.}
For confidence level $\alpha \in (0,1)$, an estimate for the Expected Shortfall of a portfolio is calculated by ordering the daily return time series over the whole period according to the magnitude of the returns, then averaging over the worst $(1-\alpha)$ percent outcomes, more specifically:
$$ \widehat{\ES}_\alpha = \frac{1}{K} \sum_{i=1}^{K} r_{(i)}  ,  $$
where $T$ is the length of the daily time series, $K = \lfloor T(1-\alpha) \rfloor$,  and $r_{(i)}$ is the $i$-th return of the \emph{magnitude-ordered} time series. To obtain the contributions to shortfall risk, recall that under a continuity assumption, the Expected Shortfall of an asset $X \in \mathcal{M}$ can be expressed as $\ES_\alpha (X) = \E\left(X \mid X \geq \VAR_\alpha(X)\right)$, or the expected loss in the event that its Value-at-Risk at $\alpha$ is exceeded.\footnote{See for example \citet{QRM2005}.} As usual, let $P = \sum_i w_i X_i$ be the portfolio in consideration. Assuming differentiability of the risk measure VaR, the marginal contribution of $X_i$ to portfolio shortfall $\ES_\alpha (P)$ is given by
$$ \mrc_i^{\ES_\alpha}(P) =\frac{\partial \ES_\alpha(P)}{\partial w_i} =  \E(X_i \mid P \geq \VAR_\alpha(P)) \quad.$$
An estimate for the $i$-th marginal contribution to shortfall risk is then obtained by averaging over all the returns of asset $X_i$ that coincide with portfolio returns exceeding the portfolio's Value-at-Risk at threshold $\alpha$.\\

\textbf{Conditional Expected Drawdown.}
The first step in calculating an estimate for Conditional Expected Drawdown is to obtain the empirical maximum drawdown distribution. From the historical time series of returns, we generate return paths of fixed length $n$ using a one-day rolling window. This means that consecutive paths overlap. The advantage is that for a return time series of length $T$, we obtain a maximum drawdown series of length $T-n$, which for large $T$ and small $n$ is fairly large, too. From these $T-n$ return paths we calculate the maximum drawdown as defined in Section 2. An estimate for the Conditional Expected Drawdown at confidence level $\alpha \in (0,1)$ is then calculated as the average of the largest $(1-\alpha)$ percent maximum drawdowns.  To obtain an estimate for the $i$-th contribution to drawdown risk CED, we take the average over all the drawdowns of the $i$-th asset in the path $[t_{j*},t_{k*}]$ that coincide with the overall portfolio's maximum drawdowns that exceed the portfolio's drawdown threshold $\textrm{DT}_\alpha$ at confidence level $\alpha$. (Recall that $j^*<k^*\leq n$ are such that 
$ \boldsymbol\mu(P_{T_n}) =  P_{t_{k*}} - P_{t_{j*}}. $)

\pagebreak

\section{Drawdown risk decomposition along a balanced portfolio of US Equity and US Treasury Bonds}

%\vspace{30pt}

\begin{figure}[h]
\centering
\includegraphics[width=0.5\linewidth]{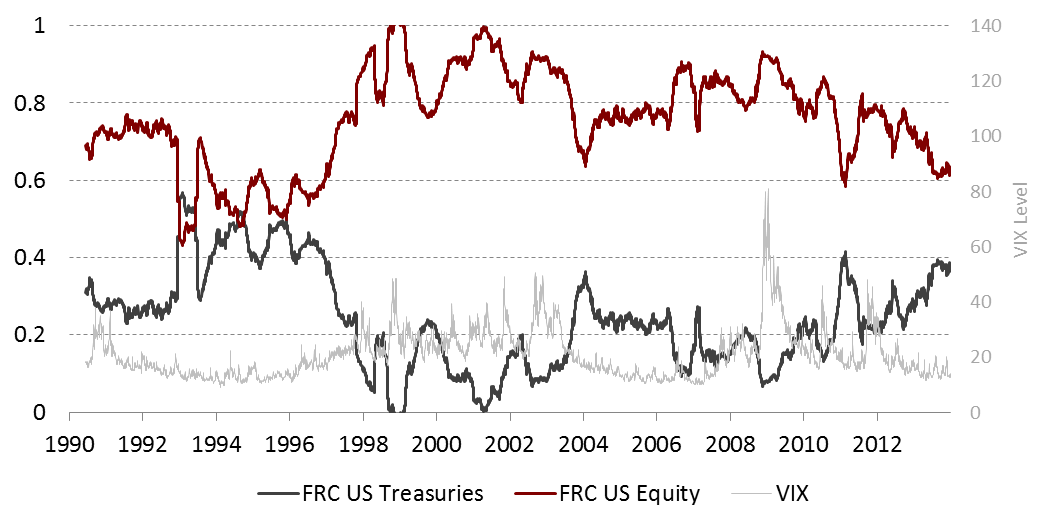}
\caption{\onehalfspacing Daily 6-month rolling Fractional Risk Contributions (FRC) along 90\% Conditional Expected Drawdown (CED) of US Equity and US Treasury Bonds to the balanced 60/40 portfolio over the period 1982--2013. Also displayed is the daily VIX series over the same period, with the right-hand axis indicating its level.}
\label{frc_dd_2}
\end{figure}

\begin{figure}[h]
\centering
\includegraphics[width=0.7\linewidth]{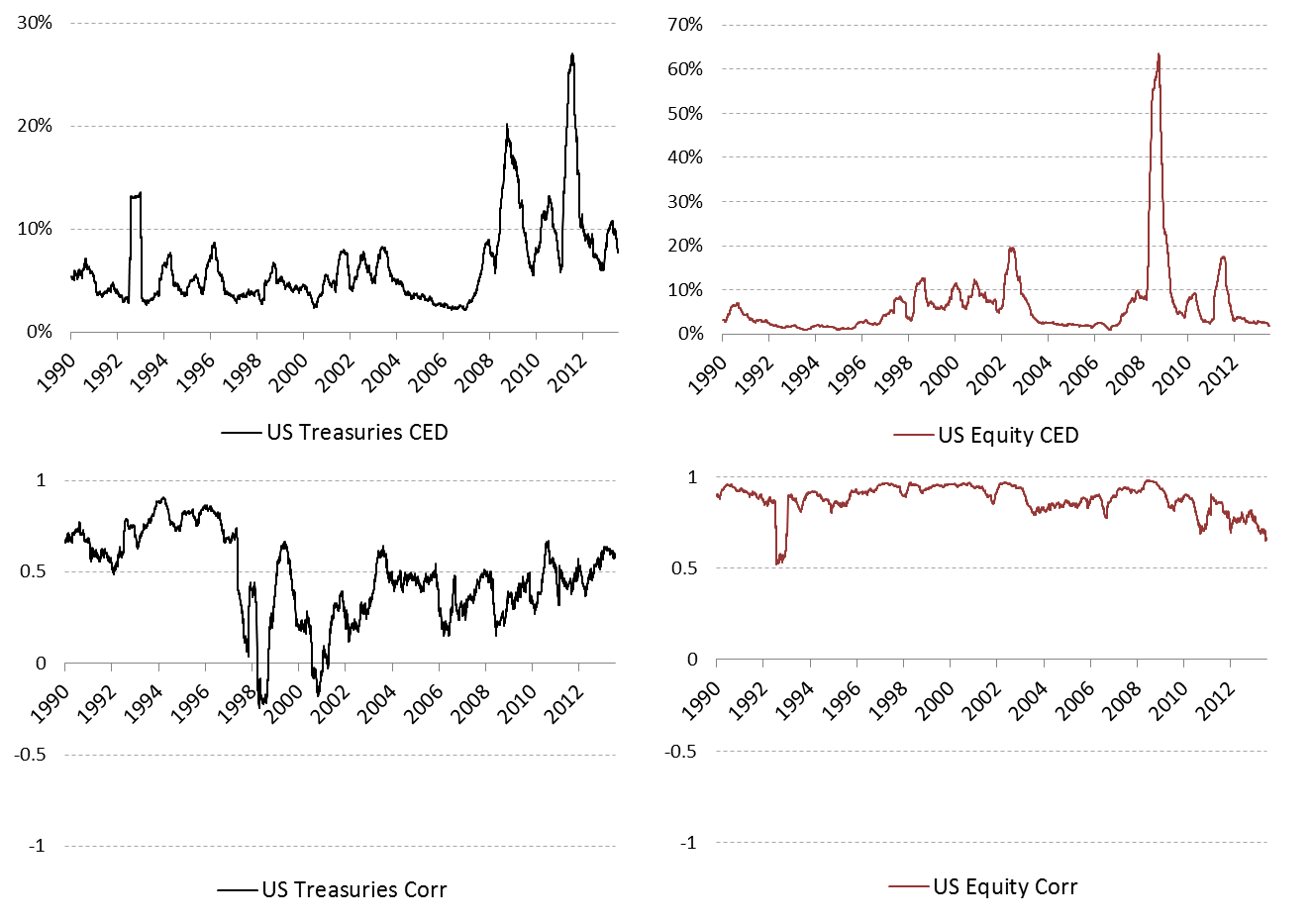}
\caption{\onehalfspacing Decomposition of drawdown risk contributions $ \rc^\CED_i(P) = w_i \CED(X_i) \corr^\CED_i$ for the 60/40 allocation to US Equity and US Treasury Bonds over the period 1982--2013. The top two panels show the daily 6-month rolling standalone 90\% Conditional Expected Drawdown (CED) of the two assets, while the bottom two panels show the 6-month rolling generalized correlations of the individual assets along CED.}
\label{rc_decomp_2}
\end{figure}

\pagebreak

\section{Risk decomposition along Expected Shortfall}

\vspace{30pt}

\begin{figure}[h]
\centering
\begin{subfigure}{\textwidth}
\centering
\includegraphics[width=0.7\linewidth]{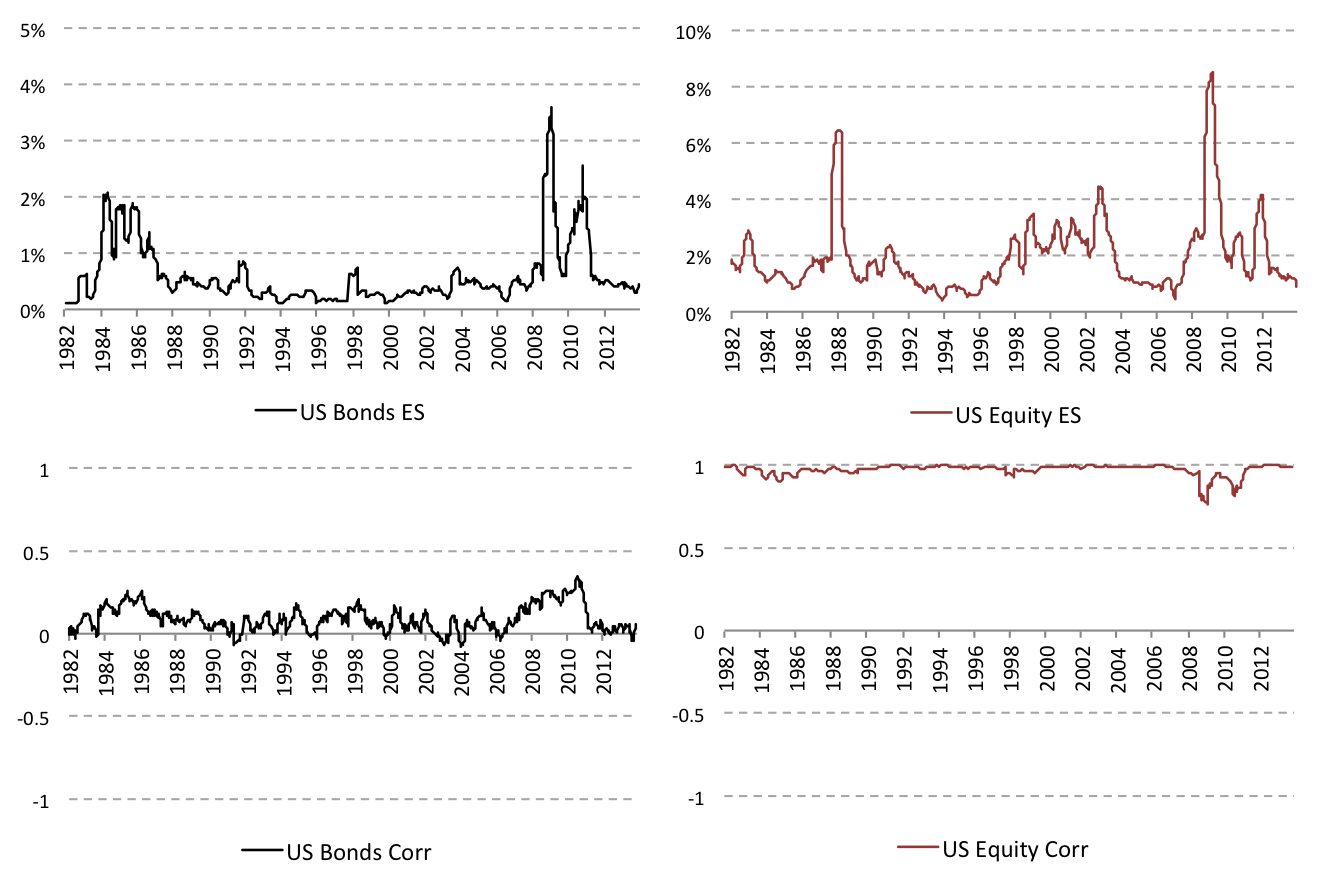}
\caption{}
\label{es_decomp_1}
\end{subfigure}
\begin{subfigure}{\textwidth}
\centering
\includegraphics[width=0.7\linewidth]{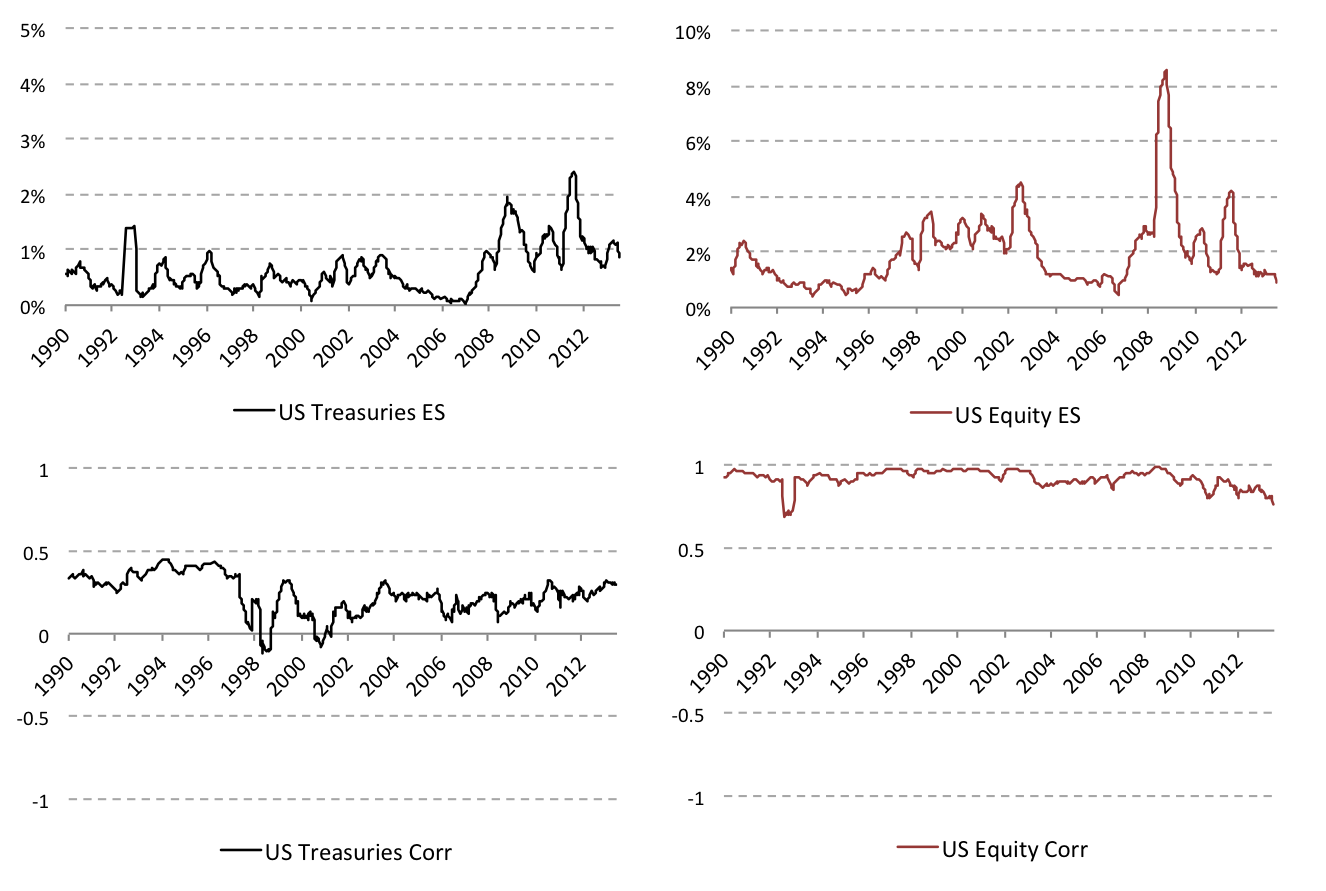}
\caption{}
\label{es_decomp_2}
\end{subfigure}
\caption{\onehalfspacing Decomposition of contributions  $ \rc^\ES_i(P) = w_i \ES(X_i) \corr^\ES_i$ to 90\% Expected Shortfall (ES)  for the 60/40 allocation to (A) US Equity and US Government Bonds, and (B) US Equity and US Treasury Bonds over the time period 1982--2013.}
\label{es_decomp}
\end{figure}

\newpage

\bibliographystyle{apalike}
\bibliography{drawdown_references}

\end{document}